\documentclass[aip, amsmath, amssymb, reprint]{revtex4-1}

\usepackage{graphicx}
\usepackage{dcolumn}
\usepackage{bm}

\usepackage[utf8]{inputenc}
\usepackage[T1]{fontenc}
\usepackage{mathptmx}
\usepackage{etoolbox}
\usepackage{mathrsfs}
\usepackage{soul, xcolor}
\usepackage{color, wasysym}

\usepackage{amsmath}
\usepackage{amsfonts}
\usepackage{amssymb}
\usepackage{graphicx}
\usepackage{bm}
\usepackage{textcomp}
\usepackage{srcltx}
\usepackage{amsthm} 
\usepackage{latexsym}
\usepackage{epsfig}
\usepackage{mathrsfs}

\makeatletter
\def\@email#1#2{%
 \endgroup
 \patchcmd{\titleblock@produce}
  {\frontmatter@RRAPformat}
  {\frontmatter@RRAPformat{\produce@RRAP{*#1\href{mailto:#2}{#2}}}\frontmatter@RRAPformat}
  {}{}
}%

\DeclareMathOperator{\D}{d\!}
\DeclareMathOperator{\E}{e} 
\DeclareMathOperator{\I}{i}

\newtheorem{theorem}{Theorem}
\newtheorem{definition}{Definition}

\makeatother
\begin{document}


\title[Probability density functions as solutions of heterogeneous Cattaneo-Vernotte diffusion equation]{Probability density functions as solutions of heterogeneous Cattaneo-Vernotte diffusion equation}

\author{K. G\'{o}rska}
\email{katarzyna.gorska@ifj.edu.pl}
\affiliation{H. Niewodnicza\'{n}ski Institute of Nuclear Physics, Polish Academy of Sciences, Division of Theoretical Physics, ul. Eliasza-Radzikowskiego 152, PL 31-342 Krak\'{o}w, Poland}

\author{\v{Z}. Tomovski}
\affiliation{Department of Mathematical Analysis and Applications of Mathematics, Faculty of Science, Palacký University Olomouc, 17 listopadu 1192/12, 779 00 Olomouc, Czech Republic}

\author{A. Horzela}
\affiliation{H. Niewodnicza\'{n}ski Institute of Nuclear Physics, Polish Academy of Sciences, Division of Theoretical Physics, ul. Eliasza-Radzikowskiego 152, PL 31-342 Krak\'{o}w, Poland}

\author{M. Magdziarz}
\affiliation{Hugo Steinhaus Center, Faculty of Pure and Applied Mathematics, Wrocław University of Science and Technology, Wyb. Wyspiańskiego 27, 50-370 Wrocław, Poland}


\begin{abstract}
In this paper, we considered a heterogeneous Cattaneo-Vernotte equation with an exponential type of diffusion coefficient under the fundamental initial and boundary conditions stating that the solution vanishes at $+/-$ infinity. Owing to the Laplace transform method we obtain two forms of exact analytical solutions which are presented in terms of the ratio of modified Bessel functions.  Using the theory of complete monotone functions, we show that the obtained solutions are probability density functions.
\end{abstract}

\begin{keywords}
{completely monotonicity, ratio of modified Bessel functions, diffusion phenomena}
\end{keywords}

\maketitle

\section{Motivation}

In diffusion processes in heterogeneous media, characterized by a spatially varying diffusion coefficient, the class of completely monotonic functions (CMs) plays a special role. CM functions are often used to demonstrate that the solutions of evolution equations arising from the diffusion approach, such as the Fokker-Planck (FP) or Cattaneo-Vernotte (CV) equations, are non-negative. The non-negativity of a function whose Laplace transform exists can be shown, e.g., by using Bernstein's theorem, which relates a CM function to a non-negative function via the Laplace transform. For example, from purely mathematical side, it was proved in \cite{EGrosswald76, MEHIsmail76, MEHIsmail77, MHEIsmail90} that
\begin{equation}
    \label{22/05/26-6}
    \frac{1}{\sqrt{s}} \frac{K_{\nu-1}(\sqrt{s})}{K_{\nu}(\sqrt{s})}, \quad s > 0, \quad \nu \geq 0, 
\end{equation}
has a CM character, is infinitely divisible, and corresponds to the student $t$-distribution. From a physical point of view, in \cite[Chapter 5]{TSandev22b} the authors showed that \eqref{22/05/26-6} can be related to the (1+1)-dimensional FP equation in heterogeneous media with ${\cal D}(x) \propto |x|^\mu$, $x\in\mathbb{R}$ and $\mu \leq 2$.

Motivated by \cite{EGrosswald76, MEHIsmail76, MEHIsmail77, MHEIsmail90}, we shall generalize \eqref{22/05/26-6} by introducing $a \equiv a(x)$ and $b \equiv b(x)$. Moreover, we shall modify the arguments of the ratio of the modified Bessel functions, such that it will contain $(\tau s^2 + s)^{1/2}$ instead of $s^{1/2}$, and add to it the prefactor $(\tau + 1/s)^{1/2} = (\tau s + 1)(\tau s^2 + s)^{-1/2}$ instead of $s^{-1/2}$. Namely, we will study
\begin{multline}
    \label{22/05/26-7}
    \frac{\tau s + 1}{\sqrt{\tau s^2 + s}} \frac{I_{\alpha - 1}(a \sqrt{\tau s^2 + s})}{I_\alpha(b \sqrt{\tau s^2 + s})} \qquad \text{and} \\ \frac{\tau s + 1}{\sqrt{\tau s^2 + s}} \frac{K_{\alpha - 1}(b \sqrt{\tau s^2 + s})}{K_\alpha(a \sqrt{\tau s^2 + s})},
\end{multline}
where $0 < a \leq b$, $s > 0$, and $\alpha\in[0,1]$. Since $s$ is related to the first time derivative, then $\tau s^2 + s$ should correspond to the second and first time derivatives. Moreover, following \cite{KGorska26} we will show that \eqref{22/05/26-7} is proportional to the probability density fucntion (PDF; non-negative and normalized) and solve
\begin{multline}
    \label{22/05/26-8}
    \tau \partial^2_t p(x, t) + \partial_t p(x, t) \\ = \partial_x\left\{{\cal D}_\lambda(x)^{1-\alpha} \partial_x\Big[{\cal D}_\lambda(x)^{\alpha} p(x, t)\Big]\right\}  
\end{multline}
for $x\in\mathbb{R}$, $t>0$ and with $p(x, t=0) = \delta(x)$, $\dot{p}(x, t)|_{t=0} = 0$ and $\lim_{x\to\pm\infty} p(x, t) = 0$. In \eqref{22/05/26-8} $\tau > 0$ denotes the time lag, $\alpha\in[0, 1]$ is the stochastic interpretation parameter \cite{JMSancho11, APacheoPozo24}, and ${\cal D}_\lambda(x) = B \exp(\lambda |x|)$, in which $\lambda \in \mathbb{R}$ and $B > 0$. 

\section{Solutions of \eqref{22/05/26-8} with ${\cal D}_\lambda(x) = B \exp(\lambda |x|)$}\label{sect2}

The solutions of \eqref{22/05/26-8} with ${\cal D}_\lambda(x)$, can be found in three steps. In the first one, we take the Laplace transform of \eqref{22/05/26-8}, which enables us to get
\begin{multline}
    \label{13/02/26-2}
    \tau\big(s^2 \widehat{p}_\lambda(\alpha; x, t) - s \delta(x)\big) + s \widehat{p}_\lambda(\alpha; x, t) - \delta(x) = \\ B \E^{\lambda |x|} \partial_x^2 \widehat{p}_\lambda(\alpha; x, t) 
    + (1 + \alpha) \lambda B \E^{\lambda |x|} [2\Theta(x) - 1] \partial_x \widehat{p}_\lambda(x, t) \\ + \alpha \lambda B \E^{\lambda |x|} [\lambda + 2 \delta(x)] \widehat{p}_\lambda(\alpha; x, t).
\end{multline}
Since ${\cal D}_\lambda(x)$ is symmetric about the vertical line through $x=0$, then $p_\lambda(\alpha; x, t)$ should conserve the same symmetry. Therefore, we set $u = \exp(-|x|)$ and $\widehat{p}_\lambda(\alpha; u, s) = \widehat{G}_\lambda(\alpha; s) \widehat{F}_\lambda(\alpha; u, s)$. Equation \eqref{13/02/26-2} in $u$ coordinates can be spitted into two parts: for $u \neq 1$ it reads 
\begin{multline}
    \label{14/02/26-3}
    \partial_u^2 \widehat{F}_\lambda(\alpha; u, s) + [1 - \lambda(\alpha + 1)] \frac{1}{u} \partial_u \widehat{F}_\lambda(\alpha; u, s) + \left(\frac{\alpha \lambda^2}{u^2}\right.  \\ \left. - \frac{\tau s^2 + s}{B} u^{\lambda - 2}\right) \widehat{F}_\lambda(\alpha; u, s) = 0
\end{multline}
and for $u =1$ it is
\begin{multline}
    \label{14/02/26-4}
    -\frac{\tau s + 1}{\widehat{G}_\lambda(\alpha; s)} = \left[-2 B u^{1-\lambda} \partial_u \widehat{F}_\lambda(\alpha; u, s) \right. \\ \left.+ 2 \alpha \lambda B u^{-\lambda} \widehat{F}_\lambda(\alpha; u, s) \right]_{u=1}.
\end{multline}
Equation \eqref{14/02/26-3} is the Bessel differential-type equation whose solution is equal to
\begin{equation}
    \label{14/02/26-5}
    \widehat{F}_{\lambda}(\alpha; u, s) = u^{\frac{\lambda}{2} (1 + \alpha)} Z_{\alpha-1}\left(\frac{2 \I u^{\lambda/2}}{\lambda \sqrt{B}} \sqrt{\tau s^2 + s}\right), 
\end{equation}
where $Z_{\nu}(\sigma)$ is a linear combination of the Bessel functions  \cite{FBrowman58}. Substituting \eqref{14/02/26-5} into \eqref{14/02/26-4} we find that $\widehat{G}_\lambda(\alpha; s)$ reads
\begin{equation*}
    \widehat{G}_\lambda(\alpha; s) = \frac{1}{2\I\sqrt{B}} \frac{\tau s + 1}{\sqrt{\tau s^2 + s}} \left[Z_{\alpha}\left(\frac{2\I}{\lambda\sqrt{B}} \sqrt{\tau s^2 + s}\right)\right]^{-1}.
\end{equation*}

In the final step, we return to the variable $x$ and impose boundary conditions on the solution, i.e., we require that $p_{\lambda}(\alpha; x, t)$ vanishes as $x \to \pm \infty$. That gives two kinds of solutions with respect to the sign of $\lambda$: for $\lambda > 0$ it is
\begin{multline}
    \label{15/02/26-2}
    \widehat{p}_{\lambda; 1}(\alpha; x, s) = \frac{1}{2 \sqrt{B}} \E^{-\frac{\lambda}{2}(1 + \alpha) |x|} \\ \times \frac{\tau s + 1}{\sqrt{\tau s^2 + s}} 
    \frac{I_{\alpha-1}\left(\frac{2}{\lambda \sqrt{B}} \E^{-\lambda |x|/2} \sqrt{\tau s^2 + s}\right)}{I_{\alpha}\left(\frac{2}{\lambda \sqrt{B}} \sqrt{\tau s^2 + s}\right)}
\end{multline}
and for $\lambda < 0$ we have
\begin{multline}
    \label{15/02/26-3}
    \widehat{p}_{\lambda; 2}(\alpha; x, s) = \frac{1}{2 \sqrt{B}} \E^{\frac{|\lambda|}{2}(1 + \alpha) |x|} \\ \times \frac{\tau s + 1}{\sqrt{\tau s^2 + s}}  \frac{K_{\alpha-1}\left(\frac{2}{|\lambda| \sqrt{B}} \E^{|\lambda x|/2} \sqrt{\tau s^2 + s}\right)}{K_{\alpha}\left(\frac{2}{|\lambda| \sqrt{B}} \sqrt{\tau s^2 + s}\right)}.
\end{multline}
In Sect. \ref{sect3} we will show that their inverse Laplace transforms are PDFs for different range of $\alpha$, namely $p_{\lambda; 1}(\alpha; x, t)$ is PDF for $\alpha\in[1/2, 1)$ and $p_{\lambda; 2}(\alpha; x, t)$ is PDF for $\alpha\in[0, 1/2]$.

\section{$p_{\lambda}(\alpha; x, t)$ as a probability density function}\label{sect3}

In the paper, we will use the following definitions and properties taken, e.g., from \cite{RLSchilling10}.
\begin{definition}\label{def1}
$f:(0, \infty) \mapsto \mathbb{R}$ is a CM function if $f\in {\cal C}^\infty$ is non-negative and $(-1)^n f^{(n)}(s) > 0$ for all $s > 0$ and $n\in\mathbb{N}_0$. 
\end{definition}
\begin{definition}\label{def2}
    $g:(0, \infty) \mapsto \mathbb{R}$ is a Bernstein (B) function if $g\in{\cal C}^\infty$ is non-negative and its derivative is CM. 
\end{definition}

\noindent
Property 1: The product of two CM functions is a CM function.

\noindent 
Property 2: The composition of a CM function with a B function is a CM function.

\noindent
Property 3: The linear combination of CM functions on a convex cone is a CM function.

\subsection{Non-negativity of $p_{\lambda; 1}(\alpha; x, t)$ for $\alpha\in[1/2, 1)$}\label{sect3.1}

Due to Bernstein's theorem, $p_{\lambda; 1}(\alpha; x, t)$ is non-negative when its Laplace transform given by \eqref{15/02/26-2} is a CM function. Thus, to prove the CM character of $\widehat{p}_{\lambda; 1}(\alpha; x, s)$ for $\alpha\in[1/2, 1)$ we use Property 1 and present it as the product of two functions, namely 
\begin{multline}\label{1/04/26-1}
\widehat{f}_1(s) = \frac{\tau s + 1}{\sqrt{\tau s^2 + s}} = \sqrt{\tau} \left(1 + \frac{1}{\tau s}\right)^{1/2} \quad \text{and} \\ \widehat{f}_{2,1}(\alpha; x, s) = \frac{I_{\alpha - 1}\Big(\frac{2}{\lambda \sqrt{B}} \E^{- \lambda|x|/2}\,  \sqrt{\tau s^2 + s}\Big)}{I_{\alpha}\Big(\frac{2}{\lambda \sqrt{B}}\,  \sqrt{\tau s^2 + s}\Big)}.
\end{multline}
As is shown in \cite[Eq. (1.19)]{KSMiller01} and \cite[Sect.4 A]{KGorska20}, $\widehat{f}_1(s)$ is a CM function and $(\tau s^2 + s)^{1/2}$ is a B function, respectively. Therefore, by Property 2, $\widehat{f}_{2, 1}(\alpha; x, s) = \widehat{F}_{2, 1}(\alpha; x, \sqrt{\tau s^2 + s})$ will be a CM function if 
   $ \widehat{F}_{2,1}(\alpha; x, s) = \big[I_{\alpha-1}(a s)/I_{\alpha}(a s)\big]\; \big[I_{\alpha}(a s)/I_{\alpha}(b s)\big]$, is a CM function. Here, $a = b \E^{-\lambda |x|/2}$ and $b = 2/(\lambda \sqrt{B})$ such that $0 < a \leq b$.

\begin{theorem}
 $I_{\alpha-1}(s)/{I_{\alpha}(s)}$ is a CM function for all $\alpha\in[1/2, 1)$ and $s >0$.\end{theorem}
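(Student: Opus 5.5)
The plan is to show, via Bernstein's theorem, that $I_{\alpha-1}(s)/I_{\alpha}(s)$ is the Laplace transform of a non-negative measure on $[0,\infty)$. The ratio is meromorphic, so I will build this measure explicitly from its Mittag-Leffler expansion. This expansion is valid for all $\alpha\in[1/2,1)$.

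The first step uses $I_\alpha'(s)=I_{\alpha-1}(s)-(\alpha/s)I_\alpha(s)$ to write
\begin{equation*}
\frac{I_{\alpha-1}(s)}{I_\alpha(s)}=\frac{\alpha}{s}+\frac{d}{ds}\log I_\alpha(s).
\end{equation*}
The Weierstrass product $I_\alpha(s)=\frac{(s/2)^\alpha}{\Gamma(\alpha+1)}\prod_{k\ge1}\bigl(1+s^2/j_{\alpha,k}^2\bigr)$ then gives
\begin{equation*}
\frac{I_{\alpha-1}(s)}{I_\alpha(s)}=\frac{2\alpha}{s}+\sum_{k\ge1}\frac{2s}{s^2+j_{\alpha,k}^2},
\end{equation*}
where $j_{\alpha,k}$ are the positive zeros of $J_\alpha$. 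Inverting term by term gives, formally, the measure
\begin{equation*}
\mu_\alpha(dt)=\Big(2\alpha+2\sum_{k\ge1}\cos(j_{\alpha,k}t)\Big)\,dt.
\end{equation*}
The single terms $s/(s^2+c^2)$ are not CM, since their inverse $\cos(ct)$ changes sign. So the whole argument has to rest on positivity of the full distributional series, not of its individual terms.

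The second step is the benchmark case $\alpha=1/2$. Here $j_{1/2,k}=k\pi$, and the ratio is $\coth s=1+2\sum_{m\ge1}\E^{-2ms}$, which is manifestly CM. Equivalently, by Poisson summation $\mu_{1/2}=\delta_0+2\sum_{m\ge1}\delta_{2m}$ is a non-negative comb. For general $\alpha$ I would aim for an analogous closed form. One route is to write
\begin{equation*}
\frac{I_{\alpha-1}(s)}{I_\alpha(s)}=1+R_\alpha(s),
\end{equation*}
using the large-$s$ limit $1$, and to represent $R_\alpha$ through the Poisson-type integral $I_\nu(s)=\frac{(s/2)^\nu}{\sqrt\pi\,\Gamma(\nu+1/2)}\int_{-1}^{1}\E^{-st}(1-t^2)^{\nu-1/2}dt$. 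Factoring out $\E^{s}$ turns both numerator and denominator into Laplace transforms of non-negative functions supported in $[0,2]$. The ratio of two such transforms can then be expanded as a renewal-type series $\sum_{m}(\text{convolution powers})$, exactly as $\coth$ arises from $1/(1-\E^{-2s})$. The restriction $\alpha\ge1/2$ is natural here because it makes the weight $(1-t^2)^{\alpha-1/2}$ locally integrable and non-singular at $t=\pm1$.

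The main obstacle is this last step: showing that the renewal/convolution expansion has non-negative coefficients, i.e.\ that $R_\alpha$ is the Laplace transform of a non-negative function for every $\alpha\in[1/2,1)$. If a direct positivity argument fails, I would fall back on a perturbative argument from $\alpha=1/2$. This would use the monotone dependence of $j_{\alpha,k}$ on $\alpha$ and the interlacing $k\pi-\pi/2<j_{\alpha,k}<k\pi$ for $\alpha\in(1/2,1)$, together with the limiting behaviour $I_0/I_1\sim 2/s$ at $\alpha\to1$. The aim would be to show that the smeared comb $\mu_\alpha$ stays non-negative. Once the theorem is established, Properties 1 and 2 with the Bernstein function $\sqrt{\tau s^2+s}$ give the CM property of the first factor in $\widehat{F}_{2,1}$.
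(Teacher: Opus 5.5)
\textbf{There is a genuine gap, and it cannot be closed.} Your Mittag-Leffler step and your treatment of $\alpha=1/2$ via $\coth s$ are correct. However, the proposal stops at the decisive point: neither the renewal expansion nor the perturbative fallback is carried out, so non-negativity of $\mu_\alpha$ for $\alpha\in(1/2,1)$ is never established. That non-negativity is in fact false.

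\emph{Why $\mu_\alpha$ fails to be non-negative.} McMahon's expansion gives $j_{\alpha,k}=k\pi+\theta+O(1/k)$ with $\theta=(2\alpha-1)\pi/4\in(0,\pi/4)$. Combine this with the distributional identity $\sum_{k\ge1}e^{ik\pi t}=-\tfrac12+\sum_{m\in\mathbb{Z}}\delta(t-2m)+\tfrac{i}{2}\,\mathrm{PV}\cot\frac{\pi t}{2}$. Near $t=2$ this gives
\[
\mu_\alpha=2\cos(2\theta)\,\delta(t-2)-\frac{2\sin 2\theta}{\pi}\,\mathrm{PV}\frac{1}{t-2}+(\text{a locally integrable, at most logarithmically singular term}).
\]
So for every $\alpha\in(1/2,1)$, $\mu_\alpha$ has a non-integrable negative density just to the right of $t=2$. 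By uniqueness of the Laplace transform, $I_{\alpha-1}/I_\alpha$ is not CM there; only $\alpha=1/2$ (where $\theta=0$) survives.

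\emph{Why your fallback also fails.} The function $r=I_{\alpha-1}/I_\alpha$ solves $r'=1-r^2+\frac{2\alpha-1}{s}r$. Hence $r=\coth s+(2\alpha-1)\,\frac{\mathrm{Shi}(2s)}{2\sinh^2 s}+O((2\alpha-1)^2)$. The correction has inverse transform $1/(2-t)$ on $(0,2)$ and $(8-3t)/((2-t)(4-t))$ on $(2,4)$. The latter tends to $-\infty$ as $t\to 2^+$, where the unperturbed comb carries no mass. Smearing the comb away from $\alpha=1/2$ therefore creates negative mass immediately.

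\emph{Two smaller points.} No renewal expansion can produce non-negative coefficients for a function that is not CM. Separately, the Poisson weight for the numerator $I_{\alpha-1}$ is $(1-t^2)^{\alpha-3/2}$, not $(1-t^2)^{\alpha-1/2}$. It is singular at $t=\pm1$, and the representation degenerates at $\alpha=1/2$.

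\textbf{The paper's proof does not supply the missing step either.} It starts from the same expansion and writes $s/(s^2+j_{\alpha,n}^2)=\int_0^\infty e^{-su}e^{-j_{\alpha,n}^2u/s}\,du$. It then applies Bernstein's theorem to $\int_0^\infty e^{-su}\{\sum_n e^{-j_{\alpha,n}^2u/s}\}\,du$. But the bracket depends on $s$, so this is not a Laplace representation. The same manipulation would make each single term $s/(s^2+j^2)$ CM, which, as you correctly observe, it is not. The paper's own Remark 1 reports numerically that $\alpha+\sum_n\cos(\xi j_{\alpha,n})$ is non-negative only for $\alpha=1/2$, in agreement with the computation above.

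Your diagnosis that positivity must come from the whole series, not term by term, is the right one. The honest conclusion is that the statement holds only at $\alpha=1/2$, which your $\coth s$ argument already proves.
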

\noindent 
{\em Proof.} To show that $I_{\alpha-1}(s)/{I_{\alpha}(s)}$ is a CM function for $\alpha\in[1/2, 1)$ and $s >0$ let us observe that it is a positive function, see \cite[Theorem 1]{JSegura21}. Next, applying the Mittag-Leffler expansion for a meromorphic function, where $j_{\alpha,1} < j_{\alpha,2}<...<j_{\alpha,n}<...$ are the positive zeros of the Bessel function $J_{\alpha}(x)$:  $I_{\alpha+1}(s)/I_\alpha(s) =  \sum_{n=1}^\infty 2 s /(s^2 + j^2_{\alpha, n})$ and using the recurrence relation, we get
\begin{equation}
    \label{20/04/26-1}
    \frac{I_{\alpha-1}(s)}{I_\alpha(s)} = \frac{2\alpha}{s} + \frac{I_{\alpha + 1}(s)}{I_\alpha(s)} 
    = \frac{2\alpha}{s} + \sum_{n=1}^\infty \frac{2 s}{s^2 + j^2_{\alpha, n}}. 
\end{equation}
The trick $s/(s^2+j^2_{\alpha,n}) = \int_0^{\infty} \exp\big[-\big(s + j^2_{\alpha,n}/s\big)u\big] \D u$, enables us to rewrite \eqref{20/04/26-1} in the form
\begin{align*}
     \frac{I_{\alpha-1}(s)}{I_\alpha(s)} & = \frac{2\alpha}{s} + \frac{I_{\alpha + 1}(s)}{I_\alpha(s)} = \frac{2\alpha}{s} + 2\int_0^{\infty} \E^{-su} \left\{\sum_{n=1}^{\infty} \E^{-\frac{j^2_{\alpha,n}}{s}u} \right\} \D u.
\end{align*}
The series in curly brackets is the Jacobi theta function, which converges for all $s > 0$ and is non-negative. Therefore, by Bernstein's theorem and Property 3, the proof is complete. \qed

\noindent
{\bf Remark 1.}
  {\rm  For $\alpha = 1/2$ we can express $I_{\alpha-1}(s)/I_{\alpha}(s)$ as
    \begin{multline*}
        \frac{I_{\alpha-1}(s)}{I_\alpha(s)} = \frac{2\alpha}{s} + \sum_{n=1}^\infty\left\{\frac{1}{s+ \I j_{\alpha,n}}+\frac{1}{s- \I j_{\alpha,n}}\right\} 
    \\ = \frac{2\alpha}{s} + \sum_{n=1}^\infty \left\{\int_0^{\infty} \E^{-\xi s}\big(\E^{\I \xi j_{\alpha,n}} + \E^{-\I \xi j_{\alpha,n}}\big)\right\} \D \xi \\
     = 2\int_0^{\infty} \E^{-\xi s}\left\{\alpha + \sum_{n=1}^{\infty} \cos(\xi j_{\alpha,n})\right\} \D \xi. 
    \end{multline*}
The numerical calculations show that the series $\alpha + \sum_{n=1}^{\infty}\cos(\xi j_{\alpha,n}) = \delta_\alpha(\xi)$ is positive only for $\alpha=1/2$, when it reduces to the Dirac comb $\delta_{1/2}(\xi)=\sum_{n\in\mathbb{Z}}\delta(\xi-2n)$, where $\delta(x)$ is the $\delta$ Dirac distribution, see Fig. \ref{Dirac_Comb}.
Hence, 
\begin{equation*}
 \frac{I_{-1/2}(s)}{I_{1/2}(s)} = 2\int_0^{\infty} \E^{-\xi s}\delta_{1/2}(\xi) \D \xi = 2\int_0^{\infty}\E^{-\xi s} \D\mu(\xi),
\end{equation*}
where it is a Lebesgue integral with respect to the non-negative finite Borel measure $\mu(\xi)$ on $(0,\infty)$ with cumulative distribution function $\delta(\xi)$.}

\begin{figure}
\begin{center}
\includegraphics[width=8.5cm]{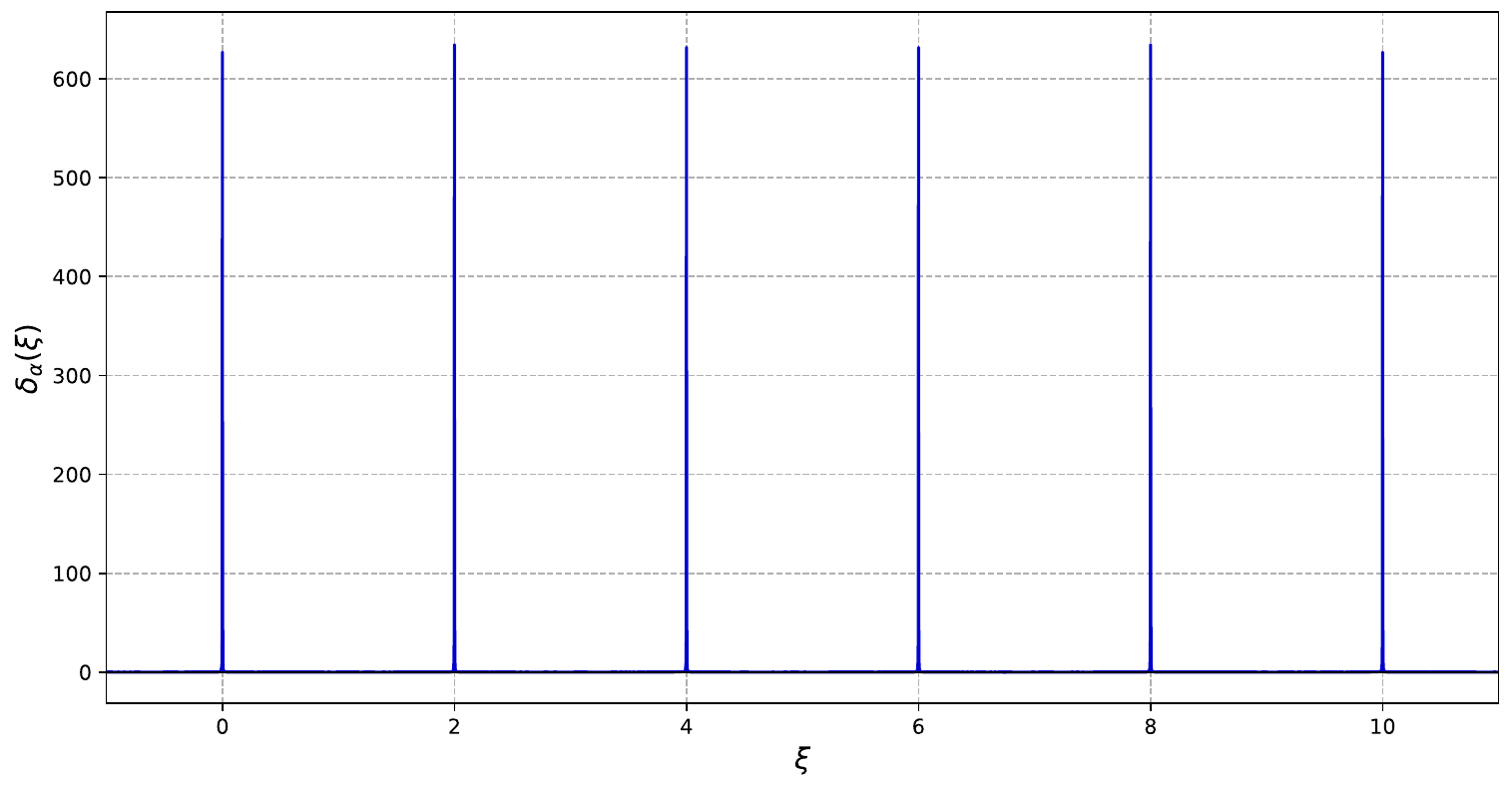}
\caption{Graph of the series $\delta_\alpha(\xi)=\alpha+\sum_{n=1}^{\infty}\cos(\xi j_{\alpha,n})$ for $\alpha=1/2$ using Abel regularization.}
\label{Dirac_Comb}
\end{center}
\end{figure}
\begin{theorem}
    $I_{\alpha}(a s)/I_{\alpha}(b s) = \E^{- \widehat{g}_1(s)}$ is a CM function for $b\geq a>0$ and $\alpha\in[1/2, 1)$ if $\widehat{g}_1(s) = \ln\big[I_{\alpha}(b s)/I_{\alpha}(a s)]$ is a B function.
\end{theorem}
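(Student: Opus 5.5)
The statement is conditional: once $\widehat{g}_1$ is known to be a B function, it suffices to show $\E^{-\widehat{g}_1(s)}$ is CM. This follows from Property 2, since $\E^{-u}$ is CM on $(0,\infty)$. I would nevertheless also prove that $\widehat{g}_1$ is indeed a B function, so that the theorem can be applied directly to $\widehat{F}_{2,1}$.

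\textbf{Non-negativity.} For $b \geq a > 0$ we have $\widehat{g}_1(s)\geq 0$, because $I_\alpha$ is positive and increasing on $(0,\infty)$ for $\alpha>-1$. Hence $I_\alpha(bs)\geq I_\alpha(as)$.

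\textbf{Derivative is CM.} A direct computation gives
\begin{equation*}
\widehat{g}_1'(s) = b\,\frac{I_\alpha'(bs)}{I_\alpha(bs)} - a\,\frac{I_\alpha'(as)}{I_\alpha(as)}.
\end{equation*}
I would write the logarithmic derivative through the recurrence $I'_\alpha/I_\alpha = \alpha/s + I_{\alpha+1}/I_\alpha$ and the Mittag-Leffler expansion already used in the proof of Theorem 1:
\begin{equation*}
\frac{I_\alpha'(z)}{I_\alpha(z)} = \frac{\alpha}{z} + \sum_{n=1}^\infty \frac{2z}{z^2+j_{\alpha,n}^2}.
\end{equation*}
Substituting $z=bs$ and $z=as$, the terms $\alpha/s$ cancel. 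This leaves
\begin{equation*}
\widehat{g}_1'(s) = \sum_{n=1}^\infty 2s\left[\frac{b^2}{b^2s^2+j_{\alpha,n}^2}-\frac{a^2}{a^2s^2+j_{\alpha,n}^2}\right]
= \sum_{n=1}^\infty \frac{2s\,(b^2-a^2)\,j_{\alpha,n}^2}{(b^2s^2+j^2_{\alpha,n})(a^2s^2+j^2_{\alpha,n})}.
\end{equation*}

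\textbf{Main obstacle.} Each summand is non-negative, but it is \emph{not} obviously CM in $s$. For example, $s/(s^2+c^2)$ is not CM, since it vanishes at $0$ and increases there. My first attempt would therefore be to prove that $\widehat{g}_1'(s)$ is CM as a whole, via a Stieltjes-type representation of $s\mapsto \ln I_\alpha(b\sqrt{s})/I_\alpha(a\sqrt{s})$. The change of variable $s\to\sqrt{s}$ is natural here because each summand is rational in $s^2$.

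Concretely, I would show that
\begin{equation*}
h(\sigma)=\ln\frac{I_\alpha(b\sqrt{\sigma})}{I_\alpha(a\sqrt{\sigma})}
=\ln\Big(\frac{b}{a}\Big)^{\alpha}+\sum_n \ln\frac{1+b^2\sigma/j_{\alpha,n}^2}{1+a^2\sigma/j_{\alpha,n}^2},
\end{equation*}
which follows from the Weierstrass product of $z^{-\alpha}I_\alpha(z)$. Each term $\ln\big((1+\beta\sigma)/(1+\gamma\sigma)\big)$ with $\beta\geq\gamma$ is a complete Bernstein function of $\sigma$, since its derivative $\beta/(1+\beta\sigma)-\gamma/(1+\gamma\sigma)$ is a Stieltjes function. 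Hence $h$ is a B function of $\sigma$.

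Composing with $\sigma=s^2$ fails, because $s^2$ is not Bernstein. So in the application I would instead compose directly with the Bernstein function $\sigma = \tau s^2+s$, which is B: its derivative $2\tau s+1$ is not CM, and I would have to check this more carefully. If that fails, I would fall back on using the product structure $\widehat{F}_{2,1}$ together with Theorem 1 and a Stieltjes representation of the ratio itself.
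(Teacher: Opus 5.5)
Your first paragraph already proves the statement as literally written. It is an implication, and since $u\mapsto \E^{-u}$ is CM, Property 2 gives it at once; this is also exactly how the paper's proof ends. The real problem lies in what you, like the paper, add on top: verifying that $\widehat{g}_1$ actually is a B function. For $b>a$ (i.e.\ for every $x\neq 0$ in the application) this is false, so neither your ``first attempt'' nor the paper's argument can succeed. From $I_\alpha(z)=\frac{(z/2)^\alpha}{\Gamma(\alpha+1)}\big[1+\frac{z^2}{4(\alpha+1)}+O(z^4)\big]$ one gets
\begin{equation*}
\widehat{g}_1(s)=\alpha\ln\frac{b}{a}+\frac{(b^2-a^2)\,s^2}{4(\alpha+1)}+O(s^4).
\end{equation*}
This is consistent with your series, since $\sum_n j_{\alpha,n}^{-2}=1/(4(\alpha+1))$. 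Hence $\widehat{g}_1'(0^+)=0$ while $\widehat{g}_1'>0$ on $(0,\infty)$. A CM function is non-increasing, so $\widehat{g}_1'$ is not CM. Equivalently, $I_\alpha(as)/I_\alpha(bs)$ is not CM: for a CM $f$, $f'$ is non-positive and non-decreasing, so $f'(0^+)=0$ forces $f$ to be constant. Your remark that $s/(s^2+c^2)$ is not CM pinpoints the flaw in the paper's verification. The paper writes $s/(s^2+c)=\int_0^\infty \E^{-(s+c/s)u}\D u$ and then applies Bernstein's theorem to a ``density'' that depends on $s$. The correct conclusion from your remark, however, is that the hypothesis fails, not that $\widehat{g}_1'$ might still be CM as a whole.

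Your change of variable does yield the right structural fact. The function $h(\sigma)=\ln[I_\alpha(b\sqrt{\sigma})/I_\alpha(a\sqrt{\sigma})]$ is a (complete) Bernstein function of $\sigma$, so $I_\alpha(a\sqrt{\sigma})/I_\alpha(b\sqrt{\sigma})$ is CM in $\sigma$. The justification needs a fix, though. The derivative $\frac{1}{\sigma+1/\beta}-\frac{1}{\sigma+1/\gamma}$ is not a Stieltjes function, since its representing measure is signed. It is CM because it equals $\int_{1/\beta}^{1/\gamma}(\sigma+r)^{-2}\D r$, and each term is a complete Bernstein function because $\ln\frac{1+\beta\sigma}{1+\gamma\sigma}=\int_\gamma^\beta\frac{\sigma}{\sigma+1/t}\,\frac{\D t}{t}$.

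The fallback then breaks. $\tau s^2+s$ is not a B function; as you note yourself, $2\tau s+1$ is increasing, so your sentence contradicts itself. Property 2 is therefore unavailable. Such compositions also destroy CM in general: for instance, $\E^{-c(\tau s^2+s)}$ is strictly log-concave, whereas CM functions are log-convex. Going through Property 2 with the genuine B function $\sqrt{\tau s^2+s}$ instead needs $I_\alpha(aw)/I_\alpha(bw)$ to be CM in $w$, which the expansion above excludes. So apart from the trivial implication, neither your proposal nor the paper's proof makes the theorem usable for $\widehat{F}_{2,1}$. Complete monotonicity of $\widehat{p}_{\lambda;1}$ would need an argument that does not factor through this theorem.
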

\noindent
{\em Proof.} Firstly, notice that $\widehat{g}_1(s)$ is a non-negative function, which comes from the algebraic inverse of \cite[Eq. (2.1)]{ABaricz10}. Direct calculations show that its first derivative is a CM function. Indeed,
\begin{align*}
\frac{\D}{\D s} &\widehat{g}_1(s) 
= b \frac{I_{\alpha-1}(b s)}{I_\alpha(b s)} - a \frac{I_{\alpha - 1}(a s)}{I_\alpha(a s)} \nonumber \\ & 
= b\left(\frac{2\alpha}{b s} + \sum_{n=1}^{\infty} \frac{2b s}{(bs)^2 + j^2_{\alpha,n}}\right) - a\left(\frac{2\alpha}{as}+\sum_{n=1}^{\infty} \frac{2as}{(as)^2+j^2_{\alpha,n}}\right) \nonumber \\
& = 2\sum_{n=1}^{\infty} \frac{s}{s^2 + \frac{j^2_{\alpha,n}}{b^2}} - 2\sum_{n=1}^{\infty}\frac{s}{s^2 + \frac{j^2_{\alpha,n}}{a^2}} \nonumber \\ &
= 2\left(\frac{1}{a^2} - \frac{1}{b^2}\right)\frac{1}{s} \sum_{n=1}^{\infty} \frac{s}{s^2 + \frac{j^2_{\alpha,n}}{b^2}} \cdot \frac{s}{s^2 + \frac{j^2_{\alpha,n}}{a^2}} j^2_{\alpha,n} > 0.
\end{align*}
Using the integral representation below \eqref{20/04/26-1}, we have
\begin{align}\label{17/05/26-5}
\frac{\D}{\D s} &\widehat{g}_1(s) = \left(\frac{1}{a^2} - \frac{1}{b^2}\right)\frac{2}{s}\iint\limits_0^{\infty} \E^{-s(u_1+u_2)}\delta_{\alpha}(u_1,u_2,s) \D u_1 \D u_2 \nonumber \\ &
 = \left(\frac{1}{a^2} - \frac{1}{b^2}\right)\frac{2}{s} \int_0^\infty \E^{-s v} \left[\int_0^v \delta_\alpha(v-u_2, u_2, s) \D u_2\right] \D v,
\end{align}
where $\delta_{\alpha}(u_1,u_2,s) = \sum_{n=1}^{\infty} j^2_{\alpha,n} \exp\big[-(u_1/a^2 + u_2/b^2) j^2_{\alpha,n}/s\big]$ is a spectral function expressed in the form of a convergent series connected to the Jacobi theta function: $ q \theta'_3(0,q)=2\sum_{n=1}^{\infty}n^2q^{n^2}$, $q\in(0, 1)$. Note that the square bracket in \eqref{17/05/26-5} is non-negative for $b \geq a$, thus by Bernstein's theorem $\widehat{g}_1(s)$ is a B function. The use of Property 2 completes the proof. \qed

\subsection{Non-negativity of $p_{\lambda; 2}(\alpha; x, t)$ for $\alpha\in[0, 1/2]$}\label{sect3.2}

Analogously to Sect. \ref{sect3.1}, the non-negativity of $p_{\lambda; 2}(\alpha; x, t)$ is proved using Bernstein's theorem. Thus, we present $\widehat{p}_{\lambda; 2}(\alpha; x, s)$ as the product $\widehat{f}_1(s)$ given by \eqref{1/04/26-1} and
\begin{align*}
    \widehat{f}_{2,2}(\alpha; x, s) & = \frac{K_{\alpha - 1}\Big(\frac{2}{\lambda \sqrt{B}} \E^{|\lambda x|/2}\, \sqrt{\tau s^2 + s}\Big)}{K_{\alpha}\Big(\frac{2}{\lambda \sqrt{B}} \sqrt{\tau s^2 + s}\Big)} \\ & = \frac{K_{1 - \alpha}\Big(\frac{2}{\lambda \sqrt{B}} \E^{|\lambda x|/2}\, \sqrt{\tau s^2 + s}\Big)}{K_{\alpha}\Big(\frac{2}{\lambda \sqrt{B}} \sqrt{\tau s^2 + s}\Big)},
\end{align*}
where $K_{\alpha}(\cdot) = K_{-\alpha}(\cdot)$ and $K_{\alpha-1}(\cdot) = K_{1-\alpha}(\cdot)$. To show that the function $ \widehat{f}_{2,2}(\alpha; x, s)$ for $\alpha\in[0, 1/2]$ is a CM function, we can use the results presented in \cite[Appendix D]{KGorska26}. However, in the paper, we will take a different approach. First, $\widehat{f}_{2, 2}(\alpha; x, s) = \widehat{F}_{2, 2}(\alpha; x, \sqrt{\tau s^2 + s})$, in which $\widehat{F}_{2, 2}(\alpha; x, s) = \big[K_{1-\alpha}(b s)/K_\alpha(b s)\big]\, \big[K_\alpha(b s)/K_\alpha(a s)\big]$, where $b = a \exp(|\lambda x|/2)$ and $a = 2/(\lambda \sqrt{B})$ such that $0 < a \leq b$. Knowing that $(\tau s^2 + s)^{1/2}$ is a B function we shall show that $\widehat{F}_{2, 2}(\alpha; x, s)$ is a CM function.

\noindent
{\bf Lemma 1.} The function $h(\alpha; \xi)= \big\{\xi\big[J^2_{\alpha}(\xi) + Y^2_{\alpha}(\xi)\big]\big\}^{-1}$ for $\xi>0$ and $\alpha\in[0,1/2)$ is convex.

\medskip
 \noindent
{\em Proof.} 
$h(\alpha; \xi)$ is convex for $\alpha\in[0,1/2)$ if its algebraic inverse $h_1(\alpha; \xi) = \xi\big[J^2_{\alpha}(\xi) + Y^2_{\alpha}(\xi)\big] > 0$ is concave. The $(n-1)$-derivative of $h'_1(\alpha; \xi)$ is given in \cite[Eq. (3.3)]{LLorch63} and reads 
\begin{multline}\label{17/06/26-3}
 (h'_1)^{(n-1)}(\alpha; \xi) = \frac{8}{\pi^2} \int_0^{\infty} K_0^{(n-1)}(2\xi \sinh u) (2 \sinh u)^{n-1} \\ \times (\tanh u) \cosh (2\alpha u) [\tanh u - 2\alpha \tanh(2\alpha u)] \D u,
\end{multline}
where $K_0(\cdot)$ is MacDonald's function of zero order. He \cite{WatB} points out, that the square bracket in \eqref{17/06/26-3} is negative for $|\alpha| > 1/2$, and positive for $|\alpha| < 1/2$. Therefore, $h_1(\alpha; \xi)$ decreases when $|\alpha| >1/2$ and increases when $\alpha\in[0,1/2)$. 

According to \cite[Theorem 4]{KSMiller01} and CM character of $K_0(\cdot)$ (see the paragraph surrounding \cite[Eq. (3.3)]{LLorch63}) it can be concluded that $h'(\alpha; \xi)$ is a CM function then $h_1(\alpha; \xi)$ is a B function for $\alpha \in [0, 1/2)$ which means that $h_1(\alpha; \xi)$ is concave for $\alpha\in[0,1/2)$. \qed
\begin{theorem}
 \label{mm}
$K_{\alpha-1}(s)/K_{\alpha}(s)$  is a CM function for $\alpha\in [0,1/2)$ and $s>0$.
\end{theorem}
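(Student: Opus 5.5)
We will prove that $K_{\alpha-1}(s)/K_\alpha(s)=K_{1-\alpha}(s)/K_\alpha(s)$ is CM by exhibiting it as the Laplace transform of a non-negative function and then invoking Bernstein's theorem, exactly as in the $I$-case. The starting point is the recurrence $K_{\alpha-1}(s)=K_{\alpha+1}(s)-\frac{2\alpha}{s}K_\alpha(s)$, which is less convenient because of the minus sign. We therefore prefer the logarithmic derivative $K_{\alpha-1}(s)/K_\alpha(s)=-K'_\alpha(s)/K_\alpha(s)-\alpha/s$ and use the classical Stieltjes-type representation (Grosswald, Ismail)
\begin{equation*}
\frac{K_{\alpha-1}(\sqrt{s})}{\sqrt{s}\,K_{\alpha}(\sqrt{s})}=\frac{2}{\pi^{2}}\int_0^\infty \frac{1}{s+\xi^{2}}\,\frac{\D\xi}{\xi\big[J_\alpha^{2}(\xi)+Y_\alpha^{2}(\xi)\big]}=\frac{2}{\pi^2}\int_0^\infty \frac{h(\alpha;\xi)}{s+\xi^2}\,\D\xi,
\end{equation*}
which holds for $|\alpha|<1$ since $K_\alpha$ has no zeros in the cut plane. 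This is where the function $h(\alpha;\xi)$ of Lemma 1 enters. Replacing $s$ by $s^2$ gives
\begin{equation*}
\frac{K_{\alpha-1}(s)}{K_\alpha(s)}=\frac{2}{\pi^2}\int_0^\infty \frac{s}{s^2+\xi^2}\,h(\alpha;\xi)\,\D\xi .
\end{equation*}
Note that $s/(s^2+\xi^2)$ alone is not CM in $s$, since it is a damped cosine's transform. So, unlike the $I$-case, a purely termwise argument fails, and we need to use the weight $h$.

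The key steps, in order, are as follows.

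\textbf{Step 1.} We write $\frac{s}{s^2+\xi^2}=\int_0^\infty \E^{-su}\cos(\xi u)\,\D u$ and interchange integrals. This gives
\begin{equation*}
\frac{K_{\alpha-1}(s)}{K_\alpha(s)}=\int_0^\infty \E^{-su}\,\phi_\alpha(u)\,\D u, \qquad \phi_\alpha(u)=\frac{2}{\pi^2}\int_0^\infty h(\alpha;\xi)\cos(\xi u)\,\D\xi .
\end{equation*}

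\textbf{Step 2.} We show $\phi_\alpha(u)\ge 0$. Here Lemma 1 is used. $h(\alpha;\cdot)$ is positive and convex for $\alpha\in[0,1/2)$. It is also decreasing to $0$ at infinity, because $\xi(J_\alpha^2+Y_\alpha^2)\to 2/\pi$ is approached from below and $h_1$ is increasing there. This makes $h$ integrable in the oscillatory sense, and near $\xi=0$ we have $h\sim \xi^{2\alpha-1}$ up to constants, which is integrable. By the P\'olya criterion (the cosine transform of a non-negative, non-increasing, convex function vanishing at infinity is non-negative), $\phi_\alpha\ge0$. Concretely, we integrate by parts twice to get $\phi_\alpha(u)=\frac{2}{\pi^2 u^2}\int_0^\infty (1-\cos\xi u)\,\D h'(\alpha;\xi)$. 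Since $h'$ is non-decreasing by convexity, the measure $\D h'$ is non-negative, and so the integrand is non-negative.

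\textbf{Step 3.} By Bernstein's theorem, the ratio is CM. For $\alpha=1/2$ the ratio equals $1$, which is trivially CM, so the endpoint can be added as well.

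The main obstacle is Step 2, in particular justifying the boundary terms in the double integration by parts. At $\xi\to0$ one has $h\sim c\,\xi^{2\alpha-1}$, which blows up when $\alpha<1/2$, and $h'$ is not integrable near zero. I would handle this by first working on $[\varepsilon,\infty)$, using monotonicity and convexity there, and then letting $\varepsilon\to0$. The boundary terms behave like $(1-\cos\varepsilon u)h'(\varepsilon)\sim \varepsilon^{2}\varepsilon^{2\alpha-2}\to0$ and $\sin(\varepsilon u) h(\varepsilon)\sim\varepsilon^{2\alpha}\to0$, and the case $\alpha=0$ needs a logarithmic check. If this approach becomes delicate, the fallback is to show directly that $\widehat{g}_2(s)=\ln[K_\alpha(s)]$ has $-\widehat{g}_2'$ CM by the same Stieltjes representation.
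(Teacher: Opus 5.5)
Your route is the paper's: Grosswald's representation, $s/(s^2+\xi^2)=\int_0^\infty \E^{-su}\cos(\xi u)\,\D u$, and positivity of the cosine transform of $h(\alpha;\cdot)$ from its convexity (Lemma 1). However, Step 2 rests on a false claim: $h(\alpha;\cdot)$ does not decrease to $0$. The fact you quote, $\xi[J_\alpha^2(\xi)+Y_\alpha^2(\xi)]\uparrow 2/\pi$, gives $h(\alpha;\xi)\downarrow \pi/2$; for $\alpha=1/2$ one even has $h\equiv\pi/2$. This breaks Step 2 in three places:
\begin{itemize}
\item the integral defining $\phi_\alpha$ does not converge;
\item the ``vanishing at infinity'' hypothesis of P\'olya's criterion fails;
\item in your first integration by parts, the boundary term $h(\alpha;\xi)\sin(\xi u)/u$ has no limit as $\xi\to\infty$.
\end{itemize}
You placed the difficulty at $\xi\to0$, where your estimates are actually fine. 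The real obstruction is at $\xi\to\infty$. It reflects the fact that the ratio tends to $1$, not $0$, as $s\to\infty$, so its Bernstein measure must carry an atom at $u=0$, which no density $\phi_\alpha$ can supply. The paper's proof has the same lacuna: it applies Tuck's criterion directly to $h$, whose cosine integral likewise diverges.

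The repair is to split off the constant. First, the constant in Grosswald's formula should be $4/\pi^2$, not $2/\pi^2$; your version is off by a factor $2$, as you can check at $\alpha=1/2$, where the ratio equals $1$. Since $\frac{4}{\pi^2}\cdot\frac{\pi}{2}\int_0^\infty \frac{s\,\D\xi}{s^2+\xi^2}=1$, you can write
\[
\frac{K_{\alpha-1}(s)}{K_\alpha(s)} = 1 + \frac{4}{\pi^2}\int_0^\infty \frac{s}{s^2+\xi^2}\Big(h(\alpha;\xi)-\frac{\pi}{2}\Big)\D\xi .
\]
Now $h-\pi/2$ is non-negative, non-increasing and convex. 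It is $O(\xi^{-2})$ at infinity, from $\xi[J_\alpha^2+Y_\alpha^2]=\frac{2}{\pi}\big(1+\frac{4\alpha^2-1}{8\xi^2}+\dots\big)$, and it is integrable at $0$, so it lies in $L^1(0,\infty)$. Fubini is then legitimate. Your double integration by parts also goes through: the boundary terms at $0$ behave as you estimated, and at infinity they vanish because $h-\pi/2\to0$ and $h'\to0$. This gives the density $\psi_\alpha(u)=\frac{4}{\pi^2u^2}\int_0^\infty(1-\cos\xi u)\,\D h'(\alpha;\xi)\ge0$. The ratio is therefore the Laplace transform of the positive measure $\delta_0+\psi_\alpha(u)\,\D u$, and Bernstein's theorem finishes the proof.

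Your fallback would not work. Since $-\frac{\D}{\D s}\ln K_\alpha(s)=K_{\alpha-1}(s)/K_\alpha(s)+\alpha/s$, complete monotonicity of this sum does not imply complete monotonicity of the ratio.
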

\noindent 
{\em Proof.} 
Grosswald in \cite{EGrosswald76} proved that 
\begin{equation}\label{13/04/26-5}
  \frac{K_{\alpha-1}(s)}{K_{\alpha}(s)} = \frac{4}{\pi^2}\int_0^{\infty} \frac{s}{s^2 + \xi^2} \,    \frac{\D \xi}{\xi \big[J^2_{\alpha}(\xi) + Y^2_{\alpha}(\xi)\big]}
\end{equation}
for $\alpha\geq 0$ and $s>0$. Then, using the Laplace transform integral $s/(s^2 + \xi^2) = \int_0^{\infty} \E^{-s u} \cos(\xi u) \D u$ 
and interchanging the order of integrations in \eqref{13/04/26-5}, we get
\begin{equation*}
\frac{K_{\alpha-1}(s)}{K_{\alpha}(s)} = \frac{4}{\pi^2}\int_0^{\infty} \E^{-s u} \left\{\int_0^{\infty}\frac{\cos(\xi u)\D\xi}{\xi\big[J^2_{\alpha}(\xi)+Y^2_{\alpha}(\xi)\big]}\right\}\D u.
\end{equation*}
By Tuck \cite{Tuc06}, the spectral function in the Laplace transform is the cosine Fourier transform, which is positive, if the integrand $\xi\mapsto \xi^{-1} \big[J^2_{\alpha}(\xi) + Y^2_{\alpha}(\xi)\big]^{-1}$ has a positive second derivative, that is, that everywhere-convex functions have everywhere-positive Fourier-cosine transforms. Applying Lemma 1, we complete the proof. \qed
\begin{theorem}
    \label{ff}
    $K_{\alpha}(b s)/K_{\alpha}(a s) = \E^{- \widehat{g}_2(s)}$ is a CM function for $b\geq a > 0$ and $\alpha\in[0, 1/2)$ if $\widehat{g}_2(s) = \ln\big[K_{\alpha}(a s)/K_{\alpha}(b s)]$ is a B function.
\end{theorem}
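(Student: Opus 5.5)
The statement is an implication: assuming that $\widehat{g}_2(s)=\ln\big[K_\alpha(a s)/K_\alpha(b s)\big]$ is a B function, we must show that $K_\alpha(bs)/K_\alpha(as)$ is CM. The plan is to write the ratio as a CM function composed with $\widehat{g}_2$ and then invoke Property 2.

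First, I will check that the identity $K_\alpha(bs)/K_\alpha(as)=\E^{-\widehat{g}_2(s)}$ makes sense. For real order $\alpha$ and positive argument, MacDonald's function satisfies $K_\alpha(\sigma)>0$. Hence for $s>0$ both $K_\alpha(as)$ and $K_\alpha(bs)$ are strictly positive, so the logarithm defining $\widehat{g}_2$ is real and smooth on $(0,\infty)$. Exponentiating $-\widehat{g}_2(s)$ then gives $K_\alpha(bs)/K_\alpha(as)$ directly. I will also note that $\widehat{g}_2\geq 0$, which the B-function hypothesis already asserts. This agrees with the fact that $\sigma\mapsto K_\alpha(\sigma)$ is decreasing on $(0,\infty)$ and that $as\leq bs$ for $b\geq a$.

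Second, I will take $\phi(y)=\E^{-y}$ as the outer function. It is CM on $(0,\infty)$, and indeed on $[0,\infty)$, because $(-1)^n\phi^{(n)}(y)=\E^{-y}>0$ for every $n\in\mathbb{N}_0$, as Definition \ref{def1} requires. By hypothesis $\widehat{g}_2$ is a B function in the sense of Definition \ref{def2}. Property 2 then yields that $\phi\circ\widehat{g}_2=\E^{-\widehat{g}_2}$ is CM. By the identity above, this is exactly $K_\alpha(bs)/K_\alpha(as)$ for $b\geq a>0$ and $\alpha\in[0,1/2)$.

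The only delicate point is that the range of $\widehat{g}_2$ must lie in the domain where $\phi$ is CM. The range of $\widehat{g}_2$ is $[0,\infty)$, and it equals $\{0\}$ when $a=b$. In the degenerate case $a=b$ the ratio is identically $1$, which is trivially CM, so I will handle it separately. For $a<b$ the composition is covered by Property 2, since $\E^{-y}$ is CM on all of $[0,\infty)$. No further argument is required for the implication as stated.
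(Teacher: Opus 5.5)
Your argument does prove the implication as literally worded, but that implication is just Property~2 applied to $\phi(y)=\E^{-y}$. It never uses the restriction $\alpha\in[0,1/2)$ or any property of $K_\alpha$ beyond positivity, and it would hold verbatim for any function in place of $\widehat g_2$. In the paper the ``if'' means ``because''. The proof, like that of Theorem~2 for $\widehat g_1$, is devoted to showing that $\widehat g_2$ \emph{is} a B function. That is what Sect.~\ref{sect3.2} needs to conclude that $\widehat F_{2,2}(\alpha;x,s)$, and hence $\widehat p_{\lambda;2}$, is CM. Under your reading nothing is established about $K_\alpha(bs)/K_\alpha(as)$, so the essential step is missing.

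That the hypothesis carries all the content is visible from $K_\alpha(z)\sim\sqrt{\pi/(2z)}\,\E^{-z}\big(1+\frac{4\alpha^2-1}{8z}+\dots\big)$, which gives
\begin{equation*}
\widehat g_2'(s)=(b-a)+\frac{1-4\alpha^2}{8s^2}\Big(\frac{1}{a}-\frac{1}{b}\Big)+O(s^{-3}),\qquad s\to\infty .
\end{equation*}
For $\alpha>1/2$ and $a<b$ the derivative approaches its limit from below. It is therefore not non-increasing, hence not CM. The restriction on $\alpha$ is exactly where the B-property lives, and your proof never uses it.

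What you must supply is a proof that $\widehat g_2$ is a B function. You already have half of it. Your remark that $\sigma\mapsto K_\alpha(\sigma)$ is decreasing (e.g.\ from $K_\alpha(\sigma)=\int_0^\infty\E^{-\sigma\cosh t}\cosh(\alpha t)\,\D t$) gives $\widehat g_2\ge 0$ unconditionally. This is more direct than the paper's appeal to Baricz's inequality, and it should be promoted from a consistency check to a proof step.

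The substantive part is the complete monotonicity of $\widehat g_2'(s)=b\,K_{\alpha-1}(bs)/K_\alpha(bs)-a\,K_{\alpha-1}(as)/K_\alpha(as)$. The paper inserts Grosswald's representation \eqref{13/04/26-5} to get
\begin{equation*}
\widehat g_2'(s)=\frac{4}{\pi^2}\,\frac{b^2-a^2}{a^2b^2}\,\frac{1}{s}\int_0^\infty\frac{s}{s^2+\xi^2/b^2}\,\frac{s}{s^2+\xi^2/a^2}\,\frac{\xi\,\D\xi}{J_\alpha^2(\xi)+Y_\alpha^2(\xi)} .
\end{equation*}
It then writes each factor $s/(s^2+c^2)$ as the Laplace transform of $\cos(cu)$ and argues positivity of the resulting spectral functions with Tuck's convexity criterion. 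Since $s/(s^2+c^2)$ alone is not CM, positivity must come from the $\xi$-weight, i.e.\ from $\alpha<1/2$.

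If you follow that route, note that $\xi/(J_\alpha^2+Y_\alpha^2)\sim\frac{\pi}{2}\xi^2$, so the cosine integrals $T_1,T_2$ do not converge as written. A cleaner organisation subtracts the limit $\pi/2$ of $h(\alpha;\xi)$ in \eqref{13/04/26-5}. This gives $K_{\alpha-1}(s)/K_\alpha(s)-1=\int_0^\infty\E^{-su}H(u)\,\D u$ with $H(u)=\frac{4}{\pi^2}\int_0^\infty\cos(\xi u)\big[h(\alpha;\xi)-\frac{\pi}{2}\big]\D\xi$. Rescaling then yields $\widehat g_2'(s)=(b-a)+\int_0^\infty\E^{-sv}\big[H(v/b)-H(v/a)\big]\D v$. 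The B-property, and with it the theorem, reduces to showing that $H$ is non-increasing on $(0,\infty)$.
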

\noindent
{\em Proof.} Due to \cite[Eq. (3.6)]{ABaricz10} $K_{\alpha}(a s)/K_{\alpha}(b s) > 1$ for $\alpha \in[0, 1/2)$ thus, $g_2(s)$ is non-negative. The first derivative of $g_2(s)$ reads
\begin{align*}
g_2'(s) & = b\frac{K_{\alpha-1}(bs)}{K_{\alpha}(bs)}-a\frac{K_{\alpha-1}(as)}{K_{\alpha}(as)} \\ & = \frac{1}{s}\left(b s \frac{K_{\alpha-1}(bs)}{K_{\alpha}(bs)} - a s \frac{K_{\alpha-1}(as)}{K_{\alpha}(as)}\right).    
\end{align*}
Since $s\mapsto\frac{s^2}{s^2+\xi^2}$ is increasing in s, using the integral representation \eqref{13/04/26-5}, we obtain
\begin{align}
g'_2(s) & = \frac{4}{\pi^2s} \int_0^{\infty}\left\{\frac{(bs)^2}{(bs)^2+\xi^2} - \frac{(as)^2}{(as)^2+\xi^2}\right\} \frac{\D\xi}{\xi\big[J^2_{\alpha}(\xi) + Y^2_{\alpha}(\xi)\big]} \nonumber \\ 
& = \frac{4}{\pi^2} \frac{b^2-a^2}{a^2b^2} \frac{1}{s} \int_0^{\infty} \frac{s}{s^2 + \xi^2/b^2}\, \frac{s}{s^2+\xi^2/a^2}\, \frac{\xi \D\xi}{J^2_{\alpha}(\xi) + Y^2_{\alpha}(\xi)} \nonumber \\
& = \frac{4}{\pi^2} \frac{b^2-a^2}{a^2b^2} \frac{1}{s} \int_0^\infty \left[\iint\limits_0^\infty \E^{-s (u_1 + u_2)} \right.\nonumber \\ & \left.\times \cos\left(\frac{\xi}{b} u_1\right) \cos\left(\frac{\xi}{a} u_2\right) \D u_1 \D u_2\right] \frac{\xi \D\xi}{J^2_{\alpha}(\xi) + Y^2_{\alpha}(\xi)}, \nonumber
\end{align}
where we employed the Laplace representation of $s/(s^2 + \xi^2)$ given below \eqref{13/04/26-5}. Taking into account the formula $\cos(x_1)\cos(x_2) = \frac{1}{2}\big[\cos(x_1+x_2) + \cos(x_1-x_2)\big]$ and changing the order of integrals we get 
\begin{multline}\label{13/05/26-1}
g'_2(s)  = \frac{2}{\pi^2} \frac{b^2 - a^2}{a^2 b^2} \frac{1}{s} \iint\limits_0^{\infty} \E^{-s(u_1 + u_2)} \\ \times \left[T_1(u_1, u_2) + T_2(u_1, u_2)\right] \D u_1 \D u_2,
 \end{multline}
in which
 \begin{multline*}
T_1(u_1, u_2) = \int_0^{\infty} \cos\left[\xi \Big(\frac{u_1}{b} + \frac{u_2}{a}\Big)\right]\,\frac{\xi \D\xi}{J^2_{\alpha}(\xi)+Y^2_{\alpha}(\xi)} \quad \text{and} \\ T_2(u_1, u_2) = \int_0^{\infty} \cos\left[\xi \Big(\frac{u_1}{b} - \frac{u_2}{a}\Big)\right]\, \frac{\xi \D\xi}{J^2_{\alpha}(\xi)+Y^2_{\alpha}(\xi)}.   
 \end{multline*}
 The functions $T_1(u_1, u_2)$ and $T_2(u_1, u_2)$ are Fourier cosine transforms, so again, by the Tuck theorem \cite{Tuc06} they are positive spectral functions, because $\xi/\big(J^2_{\alpha}(\xi)+Y^2_{\alpha}(\xi)\big)$ is convex for all $\alpha\in[0,1]$ and $\xi>0$, see Lemma \ref{yy}.

Thereafter, we set $u_1 + u_2 = v$ in the curly brackets in the double integrals in Eq. \eqref{13/05/26-1}. That gives 
\begin{multline*}
\iint_0^\infty \E^{-s(u_1 + u_2)} T_j(u_1, u_2) \D u_1\D u_2 \\
= \int_0^\infty \E^{-s v} \left[\int_0^v T_j(v-u_2, u_2) \D u_2\right] \D v
\end{multline*}
for $j = 1, 2$. From Bernstein's theorem, it follows that it is a CM function since the square bracket is non-negative. Then, we can conclude that $g'(s)$ is a CM function as the product of CM functions. That finishes the proof. \qed

In the case of $\alpha = 1/2$, $$\widehat{f}_{2,2}(1/2; x, s) = \exp(-\lambda |x|/4)\, \exp\big[-b(\E^{\lambda |x|/2}-1) (\tau s^2+s)^{1/2} \big]$$ which is a CM function. That arises from Property 2, a Bernstein character of $(\tau s^2+s)^{1/2}$, and $b(\E^{\lambda\vert x \vert/2}-1)>0$.

\subsection{Normalization}\label{sect3.3}

\begin{theorem}
\label{22/05/26-1}
    The functions $p_{\lambda; 1}(\alpha; x, t)$ are $p_{\lambda; 2}(\alpha; x, t)$ are normalized with respect to $x\in\mathbb{R}$ .
\end{theorem}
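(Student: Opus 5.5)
The plan is to verify normalization in the Laplace domain. By the non-negativity established in Sects.~\ref{sect3.1} and \ref{sect3.2}, $p_{\lambda;j}(\alpha;x,t)\geq 0$, so Tonelli's theorem allows us to interchange $\int_{\mathbb{R}}\D x$ with the Laplace integral in $t$. It therefore suffices to show that
\begin{equation*}
\int_{-\infty}^{\infty}\widehat{p}_{\lambda;j}(\alpha;x,s)\,\D x=\frac{1}{s},\qquad j=1,2,\quad s>0,
\end{equation*}
because $1/s$ is the Laplace transform of the constant $1$. Uniqueness of the Laplace transform then gives $\int_{\mathbb{R}}p_{\lambda;j}(\alpha;x,t)\,\D x=1$ for all $t>0$.

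For $j=1$ we write $\sigma=\sqrt{\tau s^2+s}$ and $c=2/(\lambda\sqrt{B})$, and use the evenness in $x$ to reduce the integral to $2\int_0^\infty$. We substitute $z=c\sigma\,\E^{-\lambda x/2}$, so that $\D x=-(2/\lambda)\,\D z/z$ and $\E^{-\frac{\lambda}{2}(1+\alpha)x}=(z/(c\sigma))^{1+\alpha}$. The $x$-integral of \eqref{15/02/26-2} then becomes
\begin{equation*}
\frac{1}{\sqrt{B}}\frac{\tau s+1}{\sigma}\,\frac{2}{\lambda}\,\frac{(c\sigma)^{-1-\alpha}}{I_\alpha(c\sigma)}\int_0^{c\sigma} z^{\alpha} I_{\alpha-1}(z)\,\D z .
\end{equation*}
We evaluate the remaining integral with the standard identity $\frac{\D}{\D z}[z^\alpha I_\alpha(z)]=z^\alpha I_{\alpha-1}(z)$. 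Since $z^\alpha I_\alpha(z)\sim z^{2\alpha}/(2^\alpha\Gamma(1+\alpha))\to 0$ as $z\to 0^+$ for $\alpha>0$, the integral equals $(c\sigma)^\alpha I_\alpha(c\sigma)$. The Bessel factors cancel and leave
\begin{equation*}
\frac{2}{\lambda\sqrt{B}\,c}\,\frac{\tau s+1}{\sigma^2}=\frac{\tau s+1}{\tau s^2+s}=\frac{1}{s}.
\end{equation*}

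For $j=2$ we proceed in the same way with $c=2/(|\lambda|\sqrt{B})$ and $z=c\sigma\,\E^{|\lambda|x/2}$, which now runs over $(c\sigma,\infty)$. Here we use $\frac{\D}{\D z}[z^\alpha K_\alpha(z)]=-z^\alpha K_{\alpha-1}(z)$ together with the exponential decay $z^\alpha K_\alpha(z)\to 0$ as $z\to\infty$. This gives $\int_{c\sigma}^\infty z^\alpha K_{\alpha-1}(z)\,\D z=(c\sigma)^\alpha K_\alpha(c\sigma)$, and the same cancellation again yields $1/s$.

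The computations themselves are routine. The main point requiring care is the justification of the boundary terms and of convergence. For $j=1$ one must check that the integrand is integrable near $z=0$, i.e.\ as $x\to\infty$. Because $I_{\alpha-1}(z)\sim z^{\alpha-1}/\Gamma(\alpha)$ for $\alpha\in[1/2,1)$, the integrand $z^\alpha I_{\alpha-1}(z)\sim z^{2\alpha-1}$ is integrable, so this holds. For $j=2$ the corresponding check is the exponential decay at infinity. A second point is to confirm that the interchange of integrals is legitimate, which is exactly where the positivity results of the preceding subsections are used.
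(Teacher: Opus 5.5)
Your proposal is correct and is essentially the paper's proof. The paper also integrates the Laplace-domain expressions \eqref{15/02/26-2} and \eqref{15/02/26-3} directly over $x$, using an exponential substitution and the indefinite integrals $\int z^{\alpha}I_{\alpha-1}(z)\,\D z=z^{\alpha}I_{\alpha}(z)$ and $\int z^{\alpha}K_{\alpha-1}(z)\,\D z=-z^{\alpha}K_{\alpha}(z)$ cited from Erd\'elyi, which reduces the integral to $(\tau s+1)/(\tau s^2+s)=1/s$. Your version adds the boundary-term, integrability and Tonelli/uniqueness justifications that the paper leaves implicit; strictly, Laplace uniqueness gives normalization only for almost every $t$, unless you also show $t\mapsto\int p\,\D x$ is continuous.
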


\noindent
{\em Proof.} The normalization of $p_{\lambda; j}(\alpha; x, t)$, $j =1, 2$, can be shown by making the direct calculations where for the case of $j =1$ we set $y = \exp(-\lambda x/2)$ and use \cite[Eq. (7.14.1 (1)), p. 89]{AErdely53} whereas for $j = 2$ we set $y = \exp(\lambda x/2)$ and use \cite[Eq. (7.14.1 (3)), p. 89]{AErdely53}. \qed


\section*{Acknowledgments}
\v{Z}T was supported by NAWA under the ULAM programme. KG thanks the Director of the IFJ PAN Grant.


\end{document}